\begin{document}

\title{A Dynamic Process Reference Model for Sparse Networks with Reciprocity\thanks{This work was supported by NSF awards SES-1826589, IIS-1939237, and DMS-1361425.}
}

\author{
Carter T. Butts\thanks{Departments of Sociology, Statistics, Computer Science, and EECS and Institute for Mathematical Behavioral Sciences; University of California, Irvine; SSPA 2145, UCI, Irvine, CA 92697-5100; \texttt{buttsc@uci.edu}}
}
\date{12/30/19}
\maketitle

\begin{abstract}
Many social and other networks exhibit stable size scaling relationships, such that features such as mean degree or reciprocation rates change slowly or are approximately constant as the number of vertices increases.  Statistical network models built on top of simple Bernoulli baseline (or reference) measures often behave unrealistically in this respect, leading to the development of sparse reference models that preserve features such as mean degree scaling.  In this paper, we generalize recent work on the micro-foundations of such reference models to the case of sparse directed graphs with non-vanishing reciprocity, providing a dynamic process interpretation of the emergence of stable macroscopic behavior.\\[5pt]
\emph{Keywords:} exponential family random graph models, reciprocity, reference measures, social networks, dynamic processes, contact formation process
\end{abstract}

\theoremstyle{plain}                        
\newtheorem{axiom}{Axiom}
\newtheorem{lemma}{Lemma}
\newtheorem{theorem}{Theorem}
\newtheorem{corollary}{Corollary}

\theoremstyle{definition}                 
\newtheorem{definition}{Definition}
\newtheorem{hypothesis}{Hypothesis}
\newtheorem{conjecture}{Conjecture}
\newtheorem{example}{Example}

\theoremstyle{remark}                    
\newtheorem{remark}{Remark}


\section{Introduction}

Recent advances in stochastic models for complex networks have provided an increasingly flexible and powerful ``toolkit'' for modeling social and other networks, motivating a growing interest in identifying model classes that generalize well across settings \citep[see e.g.][]{vanduijn.et.al:sn:1999,goodreau.et.al:d:2009,huitsing.et.al:sn:2012,mcfarland.et.al:asr:2014,schweinberger.et.al:ss:2019}, and in linking models for cross-sectional network structure with the generative processes that give rise to them \citep[e.g.][]{snijders:sm:2001,skvoretz.et.al:sn:2004,butts:jms:2019}.  Both issues become increasingly important when attempting to create models that generalize to networks of varying size, as interactions within large social, biological, or physical systems are generally impeded by geographical, physical, institutional, or other barriers whose influence on network structure can be profound \citep{mcpherson.et.al:ars:2001}.  Even when such barriers are unobserved - and where explicitly modeling them is infeasible - it is often necessary to account for their tacit influence in order to obtain reasonable model behavior.

Perhaps the most well-known manifestation of this phenomenon is in the scaling of mean degree with vertex set size (henceforth, $N$).  While a simple baseline model in which each pair of nodes has some constant probability of being tied would suggest that mean degree should scale linearly in $N$, in most systems of interest mean degree tends to be either approximately constant or at best to scale sublinearly with network size.  For example, Figure~\ref{f_denrecip} shows the scaling of density (red dots) versus $N$ for friendship nominations in schools from wave 1 of the public use sample of the AddHealth study \citep{harris.et.al:web:2009} and radio calls from teams of responders in the 2001 World Trade Center disaster \citep{butts.et.al:jms:2007}.  While both sets of networks reflect very different types of social relations, in both cases we see an approximately $1/N$ relationship between density and size (95\% CIs of power law exponents (-1.12, -1.01) for AddHealth, (-1.07,-0.76) for WTC), indicating little variation in mean degree over a large range of network sizes.  Such scaling is typical of social networks, as exemplified by the observation that, while the human population has grown over an order of magnitude since the mid-18th century \citep{caselli.et.al:bk:2006}, we have not seen a 10-fold increase in the average number of personal ties.  (Nor do the residents of New York City ($N>8.6\times 10^6$) have over 10,000 times as many friends as residents of Colerain, NC ($N=183$).)  To correctly account for this behavior is the \emph{mean degree scaling problem,} for which multiple solutions have been proposed.  For instance, limits on the capacity to sustain ties (due e.g. to time and effort \citep{mayhew.levinger:ajs:1976} or cognitive capacity \citep{dunbar:bk:1997}) would bound the maximum degree, and hence lead to an asymptotically constant mean degree for sufficiently large systems.  Alternatively, constraints on interaction opportunities associated with the greater geographical, institutional, economic, or even cultural dispersion (including intentionally designed constraints \citep{galbraith:bk:1977}) typical of larger social systems have also been suggested as sources of sparsity \citep[e.g.][]{blau:ssr:1972,blau:ajs:1977,carley:asr:1991,mcpherson:icc:2004,butts.et.al:sn:2012}.  While both types of factors are possible (and no doubt present in different settings), the type of underlying mechanism has important implications for social structure.  For instance, \citet{butts:jms:2019} demonstrates that maximum degree constraints inevitably lead to degree saturation in the large-$N$ limit (i.e., nearly all individuals having as many ties as possible at all times), casting doubt on the viability of this class of mechanisms as an explanation for mean degree scaling in typical social networks.  By contrast, \citet{butts:jms:2019} also shows that a simple model of latent social interaction - the \emph{contact formation process} (CFP) - in which tie formation is constrained to transient co-location within social settings \citep[i.e., \emph{foci}, per][]{feld:ajs:1981} is able to account for constant (or, under alternative assumptions, non-constant) mean degree scaling without saturation effects.  The core intuition of this model is that, in many networks, tie formation requires some form of co-location for interaction to take place (whether it be a physical location, an organizational co-involvement, or even a region within a topic space).  Movement of individuals between settings may be very rapid compared to the timescale of network dynamics, such that the migration dynamics are ``blurred out'' on the network timescale.  However, this unobserved process still leaves its mark on the network structure, in this case by altering the expected degree.  Where the number of social settings scales in proportion to population size, the expected degree remains constant.

\begin{figure}[h]
  \centering
    \includegraphics[width=0.45\textwidth]{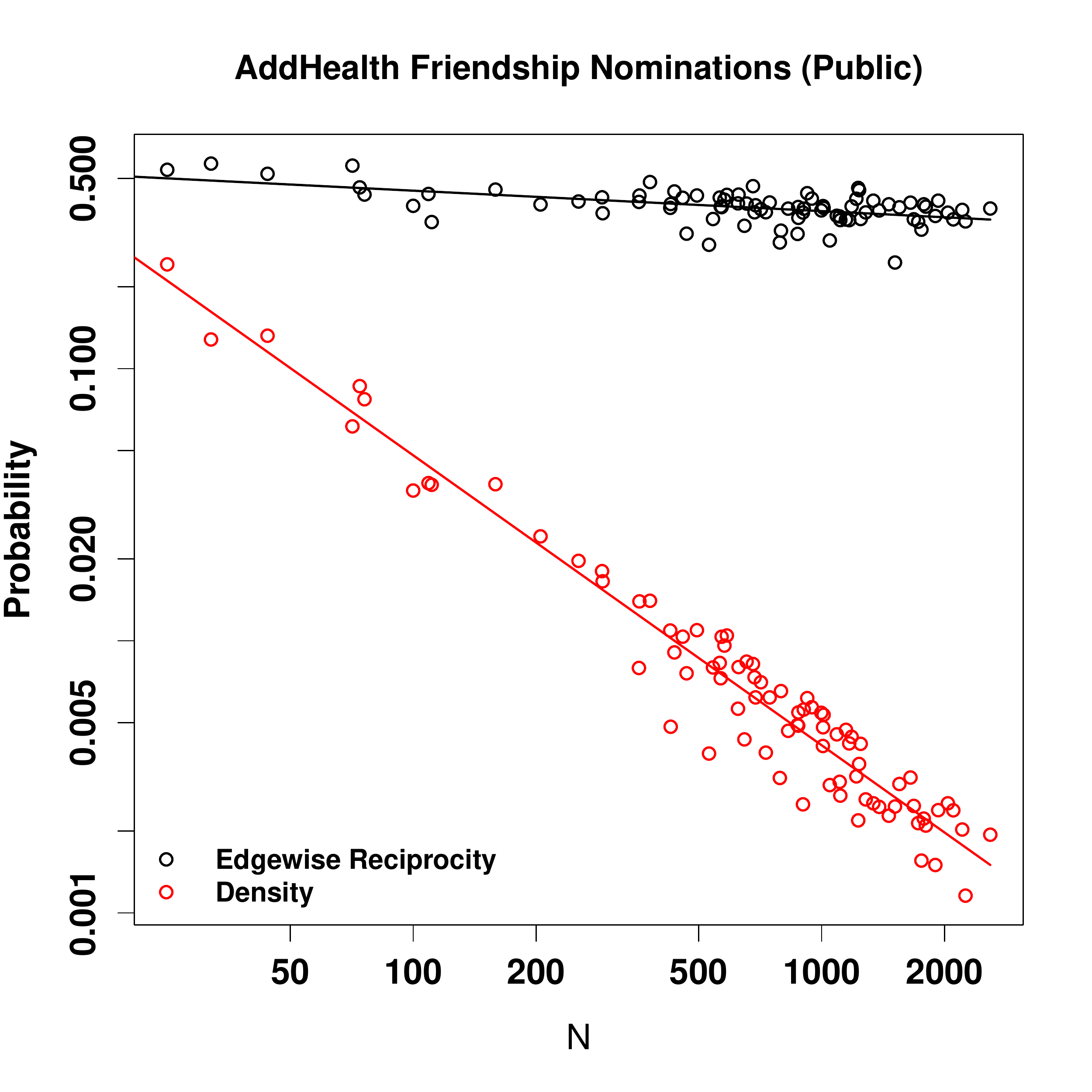}\includegraphics[width=0.45\textwidth]{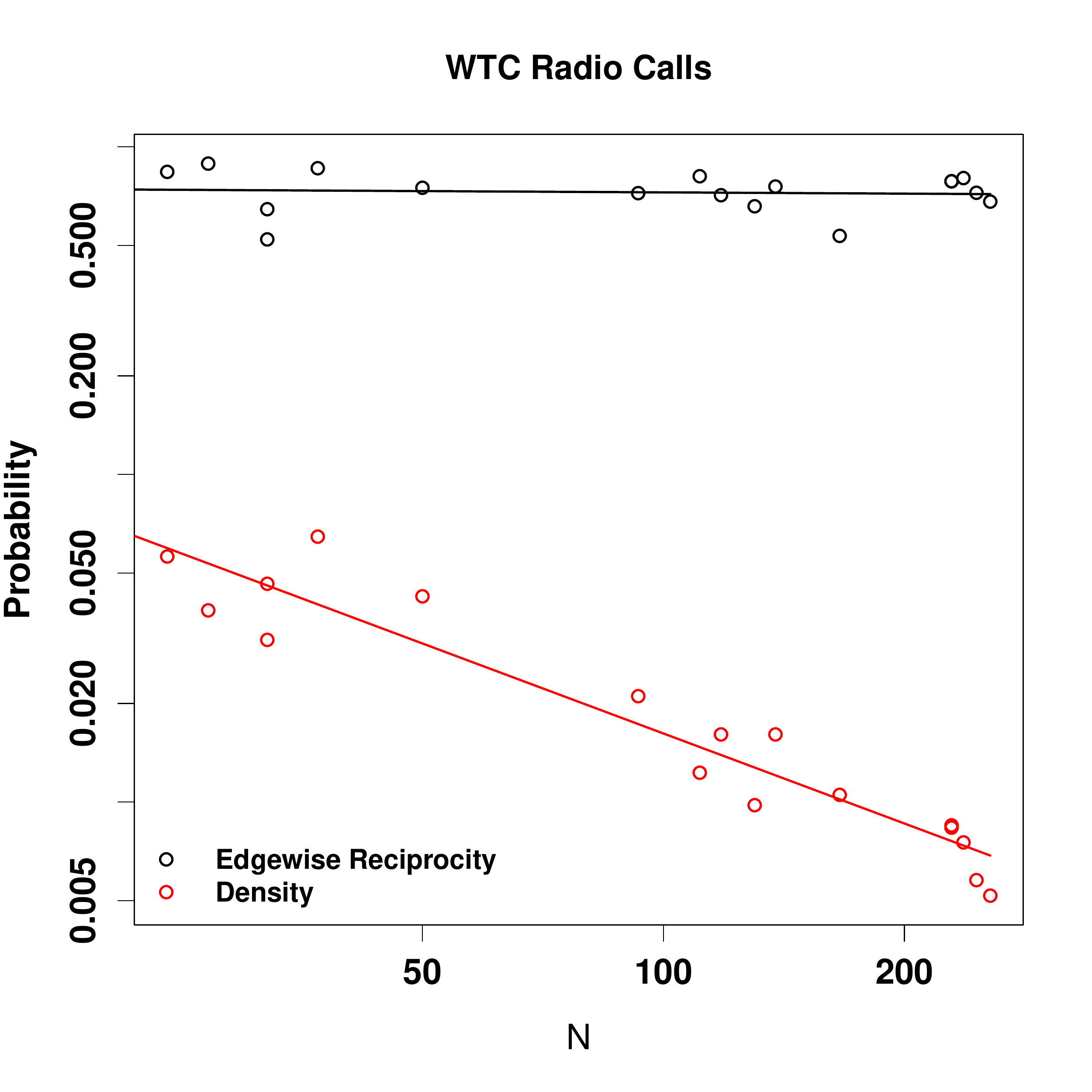}
    \caption{Density and reciprocity versus $N$ for AddHealth friendships (left) and WTC radio calls (right); dots are observed networks, with lines indicating OLS regressions for each quantity on log-log scale.  In both cases, density falls as approximately $1/N$ (implying roughly constant mean degree) while edgewise reciprocity is approximately constant over several orders of magnitude variation in network size.\label{f_denrecip}}
\end{figure}

Constant mean degree scaling raises other questions, however.  The $1/N$ density scaling corresponding to constant mean degree implies that, in the directed case, reciprocating edges should become increasingly rare as population size increases.  While this may be true of some networks, it is often counter to what is observed.  For instance, we can see from Figure~\ref{f_denrecip} that the probability edge reciprocation (the edgewise reciprocity) is nearly constant for both the AddHealth and WTC radio networks, despite sharply declining density.  This is difficult to explain in terms of \emph{ad hoc} mutuality effects, as it would require that such effects be distributed in precisely the right way to balance out density decline; put another way, dyads would have to ``know'' the size of the network in which they were embedded, in order to cancel out its effects.  While it is possible to construct models with mutuality effects that are by fiat set to become stronger with $N$ to maintain constant reciprocity \citep[as was done by][]{krivitsky.kolaczyk:ss:2015}, this phenomenological fix does not provide a mechanistic explanation for where constant reciprocity comes from, or how it can co-exist with declining density.  

In the remainder of this paper, we provide one mechanistic account for the coincidence of mean degree scaling and constant reciprocity, using a simple extension of the contact formation process.  We characterize the graph distribution arising from this extended CFP, and show how it can be employed as a reference model for exponential family random graph models (to which other effects can be added to obtain more complex models); this provides a mechanistic interpretation of the offset procedure suggested by \citet{krivitsky.kolaczyk:ss:2015} for preserving reciprocity when extrapolating ERGMs to networks of increasing size.  We explore the degree of timescale separation required for the underlying migration process to have the desired structural properties, and also comment on generalizations to alternative assumptions regarding the scaling of social settings with population.

\subsection{Exponential Family Random Graph Models}

Exponential family random graph models provide a very general framework for writing probability distributions on graph sets, and we will employ them here.  Given a random graph $G$ with support $\mathcal{G}$, the probability mass function (pmf) of $G$ in ERGM form is given by
\begin{equation}
\Pr(G=g|\theta,X) = \frac{\exp(\theta^T t(g,X)) h(g,X)}{\sum_{g' \in \mathcal{G}}\exp(\theta^T t(g',X)) h(g',X)}, \label{e_ergm}
\end{equation}
where $\theta \in \mathbb{R}^p$ is a vector of parameters, $t:\mathcal{G},X \mapsto \mathbb{R}^p$ is a vector of sufficient statistics encoding graph properties, $X$ is a covariate set, and $h$ is the reference measure with respect to which the graph distribution is defined.  This form is quite general, and indeed any distribution on finite $\mathcal{G}$ can be written in this manner (albeit not necessarily parsimoniously).  Considerable statistical theory exists regarding inference for $\theta$ given observations of $G$ \citep[see e.g.][]{hunter.handcock:jcgs:2006,hunter.et.al:jcgs:2012,schweinberger.stewart:as:2019,schweinberger.et.al:ss:2019}, and various algorithms for exact \citep{butts:jms:2018} or approximate \citep{snijders:joss:2002,morris.et.al:jss:2008,butts:jms:2015} simulation of ERGM draws have been developed.  Here, however, our focus is on using the ERGM form to describe the equilibrium behavior of the extended contact formation process (thus providing a mechanistically motivated reference model from which more complex models can be constructed).

\subsection{Baseline Models and ERGM Reference Measures}

The role of $h$ in Equation~\ref{e_ergm} can be appreciated by taking $\theta=0$, immediately leading to $\Pr(G=g|\theta=0,X) \propto h(g,X)$.  We can hence view $h$ (together with $\mathcal{G}$) as defining the \emph{baseline model} associated with a particular ERGM family, in the sense that any other distribution in the same family can be viewed as a reweighted (exponentially tilted) or biased version of the baseline model.  Most commonly, $h(g,X)$ is taken to be a constant (i.e., the counting measure), leading to the uniform distribution on $\mathcal{G}$ (a ``true'' baseline model in the original sense of \citet{mayhew:jms:1984a}); where $\mathcal{G}$ is the set of all graphs or digraphs on $N$ vertices, this distribution can also be interpreted in terms of a process in which each potential edge occurs independently with probability 0.5.

From an interpretative standpoint, it can be useful to broaden the above by including with $h$ a minimal set of terms (and associated parameters) that are thought to collectively define the baseline distribution on which the other terms act.  Specifically, we may re-write Equation~\ref{e_ergm} in this ``separated'' form as follows:
\begin{equation}
\Pr(G=g|\phi,\theta,X) = \frac{\exp(\psi^T t_r(g,X) + \theta^T t(g,X)) h(g,X)}{\sum_{g' \in \mathcal{G}}\exp(\psi^T t_r(g',X) + \theta^T t(g',X)) h(g',X)}, \label{e_refergm}
\end{equation}
where we have identified $\psi$ as reference parameters and $t_r$ as the associated statistics.  Taking $\theta=0$ then leads to what we may call the \emph{reference model} (or reference model family),
\begin{equation}
\Pr(G=g|\phi,\theta=0,X) = \frac{\exp(\psi^T t_r(g,X)) h(g,X)}{\sum_{g' \in \mathcal{G}}\exp(\psi^T t_r(g',X) ) h(g',X)}. \label{e_refmod}
\end{equation}

The logic of such a parameterization can be appreciated by noting that certain terms (in addition to choices of reference measure) are immediately motivated by one's choice of baseline process.  For instance, consider a baseline process in which, for a directed relation on an $N$-person group, each person nominates each other person at random with some constant probability $p$ (not necessarily equal to 0.5).  This corresponds to a reference model with $h(g,X)\propto 1$, $t_r(g,X)=t_e(g)$ (where $t_e(g)$ is the number of edges in $g$), and $\psi=\mathrm{logit}(p)$, i.e. a homogeneous Bernoulli graph with expected density $p$.  Such a reference model is a natural starting point for e.g., networks within small, closed group settings in which all individuals are aware of one another and (\emph{a priori}) anyone could plausibly nominate anyone else.  Additional terms added to this model may then be interpreted in terms of social forces that act to bias the underlying nomination process.

In addition to motivating certain terms, the baseline process may also motivate particular interpretations of the reference parameters.  For instance, the contact formation process of \citet{butts:jms:2019} leads to a reference model with $h(g,X)=N^{-t_e(g)}$, $t_r(g,X)=t_e(g)$, and $\psi=\log \tfrac{r_f P}{r_\ell}$, where $r_f$ is the tie formation rate, $r_\ell$ is the tie loss rate, and $P$ is the mean number of persons per focus.  Although $r_f$, $r_\ell$, and $P$ are typically unknown in inferential settings ($\psi$ estimated via a free parameter), the process interpretation does allow one to determine the hypothetical impact of e.g. increasing or decreasing the formation or loss rates, or changing the population density of foci on a specified model.  (For instance, doubling the underlying tie loss rate is equivalent to reducing the edge parameter by $\log 2$, the consequences of which in a more complex model can be explored via simulation.)  

Finally, we note that the baseline process may clarify which reference model features should be thought of as belonging to the reference measure, and which should be viewed as belonging to the exponentiated linear predictor.  This is less obvious than it may seem, since it is always possible to fold $h$ into the exponentiated portion of the model by use of offset parameters (i.e., parameters whose values are fixed).  For instance, the sparse graph reference model originally proposed by \citet{krivitsky.et.al:statm:2011} was described in terms of a counting measure ERGM with a $-\log N$ offset to the edge parameter.  The same approach was used by \citet{butts.almquist:jms:2015} for their generalization to ERGMs with power law mean degree scaling, and by \citet{krivitsky.kolaczyk:ss:2015} for their reciprocity and transitivity preserving models.  This approach to specification has the virtue of suggesting a simple implementation \emph{vis a vis} conventional software tools, and from a purely phenomenological standpoint is perfectly adequate for most current social science applications.  However, as noted above, this comes at some interpretational cost.  In particular, the reference measure has an important interpretation in terms of the (exponentiated) entropy of the graph microstate \citep[per][]{jaynes:bk:1983}, which is distinct from the effects of model terms (which are analogous to energetic effects, in the sense of internal (as opposed to free) energy).  In more intuitive terms, the underlying process that produces the network may contain hidden degrees of freedom that \emph{create more ways to produce some graphs than others}, and this is distinct from the action of social forces that bias this underlying process towards one or another outcome.  For instance, use of the counting measure implies that there is in essence the same number of ``ways'' of producing one graph as any other, and hence that all inequalities in graph probability stem from the action of social forces.  By contrast, the reference model arising from the CFP implies that there are more ways to obtain a sparse graph than a dense one (in that case, because edge formation requires vertices to share foci, and there are fewer ways for vertices to be co-located than not), and this is an inherent contributor to sparsity apart from the action of social forces.  Since such inherent biases can be expected to turn up in any network generated by the same class of processes (while social forces may vary), it is potentially useful to identify them.  Further, such identification can be critical in non-social settings, where the distinction between entropic and energetic contributors to structure is well-defined and consequential.  In particular, entropic contributions are unaffected by temperature, while changing the temperature in a physical system has the effect of rescaling $\phi$ and $\theta$; ERGMs intended for use at multiple temperatures must hence correctly distinguish between elements of $h$ and offset terms.  (Such an approach has been used e.g. by \citet{grazioli.et.al:jpcB:2019} to apply ERGMs to the modeling of protein aggregation.)  Whether or not the energetic cost of \emph{social} structure in literal terms is ultimately quantifiable (and useful) - as argued e.g. by \citet{mayhew.et.al:sf:1995} - remains an open question, but isolating the entropic drivers of structure from those arising from other sources is without doubt an important theoretical objective.

\section{A Contact Formation Process with Reciprocation}

As noted, our proximate goal is to construct a simple but plausible ``baseline'' process that can account for the co-existence of constant mean degree scaling and constant reciprocity.  We propose a simple extension of the contact formation process proposed by \citet{butts:jms:2019}, which can be informally motivated as follows.  We consider our social system to consist of a set of actors that are embedded within a set of social, institutional, or geographical ``locations'' that are fixed on the time scale of network evolution and between which individuals can readily migrate.  Following \citet{feld:ajs:1981}, we refer to these generalized locations as \emph{foci}; while we make no assumptions regarding their substantive interpretation, foci are assumed to play a critical role in tie formation.  In particular, ties within null dyads can only be formed when both members of the dyads in question reside within the same focus.  Applying this restriction to all tie formation (not only to the first tie within a dyad) results in a directed version of the CFP proposed by \citet{butts:jms:2019}.  Here, however, we assume that incoming ties themselves constitute a context for tie formation, and hence allow reciprocating ties to be formed irrespective of the foci in which the respective vertices reside (the distinguishing feature of the CFPR).  Regardless of how they are formed, ties have a constant hazard of dissolution, and vertices are assumed to migrate between foci at random (carrying their ties with them).

Formally, we define the CFPR as follows.  Assume a system of $N$ vertices, $\mathcal{V}$, each of which at any given moment resides within one of $M$ foci.  Let $\mathcal{G}_{\mathcal{V}}$ be the set of all digraphs on $\mathcal{V}$, and let $\mathcal{F}_\mathcal{V}=\{1,\ldots,M\}^N$ be the set of all possible assignments of vertices to foci (bearing in mind that some foci may be empty).  The CFPR is a continuous time Markov process on state space $\mathcal{G}_\mathcal{V} \times \mathcal{F}_\mathcal{V}$, whose permissible transitions consist of: (1) adding a directed edge between non-adjacent ordered pair $(i,j)$ when either (i) $i$ and $j$ reside in the same focus, or (ii) the $(j,i)$ edge exists; (2) removing an $(i,j)$ edge; and (3) moving a single vertex from its current focus to another.  The hazards for these transitions are determined as follows.  Every directed dyad at risk for formation has a constant hazard $r_f$ of forming an edge (with the risk set being the set of non-adjacent $(i,j)$ ordered vertex pairs such that either $j$ is adjacent to $i$ or $i$ and $j$ occupy the same focus).  Likewise, every currently existing $(i,j)$ edge has a constant hazard $r_\ell$ of dissolution (with this being invariant to location or network structure).  Finally, every vertex has a constant hazard $r_m/(M-1)$ of migrating from its current focus to each other focus (yielding a total per-vertex migration hazard of $r_m$, if we imagine that with probability $1/M$ a migrating vertex elects to remain within the same focus).

From these conditions, we may immediately observe that, so long as all hazards are positive and finite, the CFPR has neither transient nor absorbing states.  Its state space is finite and connected (obviously, we may obtain any combination of graph structure and focus assignments from any other by a series of movements and edge additions or deletions), and the Markov chain formed by the above transition rules is non-periodic.  It follows then that the CFPR is ergodic, with a stationary (or equilibrium) distribution to which the system converges.  The exact properties of this distribution depend upon the associated rates, with the migration rate being of particular importance.  As with development of \citet{butts:jms:2019}, we are here interested in the \emph{fast mixing regime}, in which $r_m \gg r_f,r_\ell$.  In this limit, migration is much faster than tie formation or dissolution, with foci representing transient sites of interaction (e.g., meetings or meeting places) rather than long-term contexts in which individuals are embedded.  While the assignment of individuals to foci is ``blurred out'' in this regime, the underlying dynamics nevertheless affect network structure, as we show below.

\paragraph{Event Representation:}  While the above is one characterization of the CFPR, other equivalent characterizations can be useful for specific purposes.  In particular, in showing the asymptotic independence of dyads in the fast mixing regime we will make use of an event-based representation of the CFPR.  As before, we define the CFPR as a continuous time process on state space $\mathcal{G}_\mathcal{V} \times \mathcal{F}_\mathcal{V}$, letting $G^{(t)} \in \mathcal{G}_\mathcal{V}$ being the state of the network at time $t$, and $F^{(t)}\in\mathcal{F}_\mathcal{V}$ the corresponding vector of focus assignments.  We then associate with each ordered vertex pair $(i,j)$ a set of \emph{formation events} and a set of \emph{dissolution events}, and each vertex/focus pair $(i,k)$ a set of \emph{migration events}, all of which occur as Poisson processes.  Specifically, the dissolution events for pair $(i,j)$ and migration events for pair $(i,k)$ occur as homogeneous Poisson processes with respective constant hazards $r_\ell$ and $r_m/M$, for all ordered vertex and vertex/focus pairs (respectively).  Formation events for vertex pair $(i,j)$ occur as an inhomogeneous Poisson process with piecewise constant hazard $r_f$ when $F^{(t)}_i=F^{(t)}_j$ or when $(j,i)\in G^{(t)}$, and otherwise 0.  Under this construction, we may recover the system state at time $t$ by examining its event history: $(i,j)\in G^{(t)}$ if and only if the most recent $(i,j)$ event was a formation event and $F^{(t)}_i=k$ if and only if the most recent migration event involving vertex $i$ was an $(i,k)$ event.  Although they can be viewed as a pure mathematical contrivance, we may also think of formation events in this construction as representing hypothetical opportunities for tie formation to occur; such an event creates the corresponding tie if it is not already present, and otherwise has no effect.  Similarly, dissolution events terminate edges that are present, but otherwise do nothing.  Expressing the CFPR in terms of these underlying events facilitates certain results, as shown below.

\subsection{Equilibrium Behavior of the CFPR} \label{sec_eq}

We now consider the behavior of the CFPR in equilibrium (i.e., when observed at a random time), in the fast-mixing regime for which $r_m \gg r_f,r_\ell$.  In the development below, it will often be convenient to work with the mean number of vertices per focus, which we denote by $P=N/M$; we are particularly interested in the behavior of systems at constant population density, i.e. for which $P$ is constant in $N$.  Cases in which $P$ is not constant (i.e., $M$ does not scale linearly in $N$) are discussed in section~\ref{sec_varm}.

\subsubsection{Expected Dyad Census}

We begin by determining the expected dyad census at equilibrium, in the fast-mixing regime.  Denote the expected numbers of mutual, asymmetric, and null dyads in the network at a random time by $D_m$, $D_a$, and $D_n$ (respectively).  From the definition of the CFPR, we can immediately observe that edges are lost at constant rate $r_\ell$; thus, mutual dyads must convert to asymmetric dyads at rate $2 r_\ell$, and asymmetric dyads must likewise convert to null dyads at rate $r_\ell$.  Since reciprocation is always permitted, we can immediately see that asymmetric dyads convert to mutual dyads at rate $r_f$, irrespective of vertex location.  Null conversion, however, is more subtle.  For pairs within the same focus, nulls convert to asymmetrics at rate $2r_f$, with the conversion rate otherwise being 0.  The mean rate of conversion hence depends upon the latent migration process.  Since migration is uniform, we observe that in equilibrium all locations are chosen at random; thus, the chance of two vertices within a dyad occupying the same focus is $1/M$, and the net conversion rate is hence $2/M r_f$.

From these conversion rates, we can obtain the expected dyad census.  In equilibrium, the expected gains and losses of mutual dyads must be equal, and hence we can solve for the expected number of asymmetrics as a function of the expected number of mutuals:
\begin{align}
  D_a r_f &= D_m 2 r_\ell \nonumber \\
      D_a &= 2 \frac{r_\ell}{r_f} D_m. \label{eq:DaasDm}
\end{align}
Applying the same logic to the gain/loss rates for asymmetric dyads, and substituting from equation~\ref{eq:DaasDm} also allows us to solve for the expected number of nulls as a function of the expected number of mutuals:
\begin{align}
  D_m 2 r_\ell + \frac{2}{M} r_f D_n &= D_a r_f + D_a r_\ell \nonumber \\
                        &= 2 \frac{r_\ell}{r_f} D_m (r_f + r_\ell) \nonumber \\
            \frac{2}{M} r_f D_n &= 2 \frac{r_\ell}{r_f} D_m (r_f + r_\ell) - D_m 2 r_\ell \nonumber \\
                        &= 2 r_\ell \left[ \left(1+\frac{r_\ell}{r_f}\right) - 1 \right] D_m \nonumber \\
                        &= 2 \frac{r_\ell^2}{r_f} D_m \nonumber \\
                     D_n &= M \left(\frac{r_\ell}{r_f}\right)^2 D_m, \label{eq:DnasDm}
\end{align}
where we have used the fact that the expected conversion rate from nulls to asymmetrics in the fast-migration regime is $2/M r_f$.

Since the total number of dyads, $D$, is fixed, we may now combine equations \ref{eq:DaasDm} and \ref{eq:DnasDm} to find the expected number of mutuals in terms of the CFPR parameters (and thereby the rest of the dyad census).  Specifically,
\begin{align*}
  D &= D_a + D_m + D_n \\
    &= 2 \frac{r_\ell}{r_f} D_m + D_m + M \left(\frac{r_\ell}{r_f}\right)^2 D_m \\
    &= D_m \left[ 1 + 2 \frac{r_\ell}{r_f} + M \left(\frac{r_\ell}{r_f}\right)^2 \right],
\end{align*}
and hence
\begin{align}
 D_m &= \frac{D}{ 1 + 2 \frac{r_\ell}{r_f} + M \left(\frac{r_\ell}{r_f}\right)^2 }, \label{eq:Dm}
\end{align}
\begin{align}
 D_a &= 2 \frac{r_\ell}{r_f} \frac{D }{ 1 + 2 \frac{r_\ell}{r_f} + M \left(\frac{r_\ell}{r_f}\right)^2 } \nonumber \\
     &= \frac{2 D}{ \frac{r_f}{r_\ell} + 2 + M \frac{r_\ell}{r_f} }, \label{eq:Da}
\end{align}
and
\begin{align}
 D_n &= M \left(\frac{r_\ell}{r_f}\right)^2 \frac{ D }{ 1 + 2 \frac{r_\ell}{r_f} + M \left(\frac{r_\ell}{r_f}\right)^2 } \nonumber \\
     &= \frac{M D}{ \left(\frac{r_f}{r_\ell}\right)^2 + 2 \frac{r_f}{r_\ell} + M }. \label{eq:Dn}
\end{align}

\subsubsection{Expected Degree and Reciprocity}

From the dyad census, we can determine how participation in dyad types scales with network size.  For instance, we can see from equation~\ref{eq:Dm} that the expected number of mutuals per vertex is
\begin{align}
\frac{D }{ 1 + 2 \frac{r_\ell}{r_f} + M \left(\frac{r_\ell}{r_f}\right)^2 } \frac{1}{N}  &= \frac{N(N-1)/2}{ 1 + 2 \frac{r_\ell}{r_f} + \frac{N}{P} \left(\frac{r_\ell}{r_f}\right)^2 } \frac{1}{N} \nonumber \\
 &= \frac{N-1}{ 2 + 4\frac{r_\ell}{r_f} + 2 \frac{N}{P} \left(\frac{r_\ell}{r_f}\right)^2 }\nonumber  \\
 &\xrightarrow[N\to\infty]{} \frac{N}{ 2 \frac{N}{P} \left(\frac{r_\ell}{r_f}\right)^2 }\\
 &= \frac{1}{2} P \left(\frac{r_f}{r_\ell}\right)^2,
\end{align}
showing that the expected number of mutual relationships for an arbitrary vertex is asymptotically constant in the limit of network size.  Likewise for the expected number of asymmetrics per vertex (using equation~\ref{eq:Da}):
\begin{align}
\frac{2 D }{  \frac{r_f}{r_\ell} + 2 + M\frac{r_\ell}{r_f} }  \frac{1}{N} &= \frac{N(N-1)}{2} \frac{ 2 }{  \frac{r_f}{r_\ell} + 2 + M\frac{r_\ell}{r_f} }  \frac{1}{N} \nonumber \\
 &= \frac{N-1}{ \frac{r_f}{r_\ell} + 2 + \frac{N}{P} \frac{r_\ell}{r_f} } \nonumber \\
 &\xrightarrow[N\to\infty]{} \frac{N}{ \frac{N}{P} \frac{r_\ell}{r_f}} \\
 &= P \frac{r_f}{r_\ell},
\end{align}
which is also constant in $N$.  Since the expected in- and outdegree must be related to the expected dyad census by $\bar{d}=(2D_m+D_a)/N$ (using standard graph identities), it follows that the mean degree is asymptotically
\begin{align}
\lim_{N\to\infty}\bar{d} &=  P \left(\frac{r_f}{r_\ell}\right)^2 + P \frac{r_f}{r_\ell} \nonumber \\
        &= P \frac{r_f}{r_\ell} \left( \frac{r_f}{r_\ell} + 1\right), \label{e_meandeg}
\end{align}
which does not depend on $N$.  Likewise, we can obtain the limiting probability that a randomly chosen edge will be reciprocated (i.e., edgewise reciprocity) from
\begin{align}
\frac{D_m}{D_m + \frac{D_a}{2}} &= \frac{1}{1 + \frac{1}{2} \frac{D_a}{D_m}}  \nonumber \\
 &=  \left[ 1 + \frac{1}{2} \frac{2 D }{ \frac{r_f}{r_\ell} + 2 + M \frac{r_\ell}{r_f} }  
    \frac{ 1 + 2 \frac{r_\ell}{r_f} + M \left(\frac{r_\ell}{r_f}\right)^2 }{D} \right]^{-1} \nonumber \\
 &=  \left[1 + \frac{ 1 + 2 \frac{r_\ell}{r_f} + M \left(\frac{r_\ell}{r_f}\right)^2 }{ \frac{r_f}{r_\ell} + 2 + M \frac{r_\ell}{r_f} } \right]^{-1} \nonumber \\
 &=  \left[1 + \frac{ 1 + 2 \frac{r_\ell}{r_f} + \frac{N}{P} \left(\frac{r_\ell}{r_f}\right)^2 }{ \frac{r_f}{r_\ell} + 2 + \frac{N}{P} \frac{r_\ell}{r_f} } \right]^{-1} \nonumber \\
 &\xrightarrow[N\to\infty]{} \left[1 + \frac{ \frac{N}{P} \left(\frac{r_\ell}{r_f}\right)^2 }{ \frac{N}{P} \frac{r_\ell}{r_f} } \right]^{-1} \nonumber \\
 &= \left[1 + \frac{r_\ell}{r_f} \right]^{-1},
\end{align}
which is also constant in $N$.  (Note that we have in the first step implicitly equated an expectation of ratios with a ratio of expectations, which can be done here because of the concentration of the dyad census in the large-$N$ limit.  We verify convergence for realistic values of $N$ by simulation in section~\ref{sec_timescale}.)

\subsubsection{Asymptotic Independence of Dyads}

Having shown the limiting behavior of density and reciprocity under the CFPR, we now show that these fully characterize its behavior in the fast-mixing regime; that is, we show conditional independence of dyads in the limit as $r_m\gg r_f,r_\ell$.  Our development here closely follows that of \citet{butts:jms:2019} for the CFP, with adjustments for the directed case.  In particular, we begin by generalizing a result of \citet{butts:jms:2019} regarding the conditional distribution of vertex co-residence time, which for convenience we state as a lemma:

\begin{lemma} \label{l_ctime}
Let $G^{(0)}$ be a random-time realization of a graph arising from a contact formation process with mixing rate $r_m$, and let $i,j$ be vertices of $G$.  Define $[a,b]$, $a<b\le 0$ to be an interval prior to realization time, and let $C^b_a$ be the total time within $[a,b]$ during which $i$ and $j$ occupy the same focus.  Then, if $\Pr(G^{(0)}=g|C^b_a=c)>0$ for all $g,c$, then $C_a^b|G^{(0)}=g$ converges in probability to $(b-a)/M$ as $r_m \to \infty.$
\end{lemma}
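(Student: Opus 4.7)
The plan has two steps: (i) show $C_a^b \to (b-a)/M$ in probability \emph{unconditionally} by exploiting the fast mixing of the focus chain, and (ii) lift this to the conditional distribution given $G^{(0)}=g$ using the compatibility hypothesis.

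For step (i), observe that at a random (stationary) time the joint focus process $(F_i^{(t)}, F_j^{(t)})$ has each coordinate uniform on $\{1,\ldots,M\}$ with the two coordinates independent, so $Y(t) := \mathbf{1}\{F_i^{(t)} = F_j^{(t)}\}$ has mean $1/M$ and $E[C_a^b] = (b-a)/M$. The pair $(F_i, F_j)$ is an irreducible, reversible CTMC on $\{1,\ldots,M\}^2$, and a direct computation shows that the centred indicator $\tilde Y(k,k') := \mathbf{1}\{k=k'\} - 1/M$ is an eigenfunction of its generator with eigenvalue of order $-r_m$ (specifically $-2r_m$). Hence $\rho(\tau) := \mathrm{Cov}(Y(0), Y(\tau)) \le C e^{-c r_m \tau}$ for some $c,C>0$, giving
\begin{equation*}
\mathrm{Var}(C_a^b) \;=\; 2\int_0^{b-a}(b-a-\tau)\rho(\tau)\,d\tau \;=\; O(1/r_m) \;\to\; 0,
\end{equation*}
and Chebyshev's inequality delivers $C_a^b \to (b-a)/M$ in probability unconditionally.

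For step (ii), Bayes' formula gives, for any $\epsilon>0$,
\begin{equation*}
P\!\left(\left|C_a^b - \tfrac{b-a}{M}\right| > \epsilon \,\Big|\, G^{(0)}=g\right) \;\le\; \frac{P(|C_a^b - (b-a)/M| > \epsilon)}{P(G^{(0)}=g)}.
\end{equation*}
The numerator vanishes by step (i). The denominator can be written as $\int P(G^{(0)}=g\mid C_a^b=c)\,f_{r_m}(c)\,dc$, where $f_{r_m}$ is the density of $C_a^b$; since $f_{r_m}$ concentrates at $(b-a)/M$ and the compatibility hypothesis ensures the integrand is strictly positive there, the integral is bounded below as $r_m\to\infty$, completing the conditional argument.

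The main obstacle is this final passage: transferring concentration of $f_{r_m}$ through the integrand requires some regularity (e.g., lower semicontinuity) of $c \mapsto P(G^{(0)}=g \mid C_a^b=c)$ at $c=(b-a)/M$, which is not directly stated in the hypotheses. Following the CFP treatment in \citet{butts:jms:2019}, the cleanest workaround is to condition on the full focus trajectory $(F^{(t)})_{t\in[a,b]}$ rather than on the scalar $C_a^b$; given that trajectory, formation and dissolution events are independent (inhomogeneous) Poisson processes whose graph output depends measurably on aggregate trajectory statistics, so one can apply step (i) path-by-path and then integrate to recover the required regularity.
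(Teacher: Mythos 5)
Your proof is correct and follows the same two-step architecture as the paper's: first establish unconditional concentration of $C_a^b$ at $(b-a)/M$ via a second-moment bound and Chebyshev, then transfer to the conditional law given $G^{(0)}=g$ by a Bayes argument using the positivity hypothesis. The difference lies in how the variance bound is obtained. The paper computes the distribution of $C_a^b$ explicitly: relative migrations of $j$ with respect to $i$'s focus form a Poisson process of rate $2r_m$, the number of co-residence onsets in $[a,b]$ is $\mathrm{Pois}(2tr_m/M)$, each co-residence interval is $\mathrm{Exp}(2r_m)$, and the mixture identities give $\mathbf{E}C_a^b\to t/M$ and $\mathrm{Var}(C_a^b)\to t/(r_m M)$ exactly. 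You instead observe that the centred co-residence indicator is an eigenfunction of the pair chain's generator with eigenvalue $-2r_m$, so the stationary autocovariance decays as $e^{-2r_m\tau}$ and the integrated-covariance formula gives $\mathrm{Var}(C_a^b)=O(1/r_m)$; this reproduces the paper's leading-order constant $t/(r_m M)$ and is equally valid, arguably more portable to non-uniform migration kernels, though it yields only an order bound rather than the exact limiting distribution. On the conditional step, the paper's argument is essentially the heuristic you give (vanishing marginal density off $t/M$ times a likelihood bounded in $(0,1)$), and it glosses over exactly the regularity issue you flag: one needs the map $c\mapsto\Pr(G^{(0)}=g\mid C_a^b=c)$ to be bounded away from zero near $c=t/M$, uniformly as $r_m$ grows, for the posterior to concentrate. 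Your explicit identification of this gap, and the proposed remedy of conditioning on the full focus trajectory so that the graph law factors over independent inhomogeneous Poisson formation/dissolution processes, is if anything more careful than the published proof.
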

\begin{proof}
Without loss of generality, we fix the location of $i$ as a reference, and consider the total length of time during the interval $[a,b]$ in which $j$ is in the reference focus.  Since migration events occur as homogeneous Poisson processes with rate $r_m$ for each vertex, migration events for vertex $j$ relative to vertex $i$'s position occur as a homogeneous Poisson process with rate $2r_m$ (migration of the reference position being equivalent to migration of $j$).  Since each migration event selects the destination focus at random from the $M$ that are available, the probability that such an event will result in $j$ occupying the reference focus is always $1/M$ (regardless of $j$'s starting position).  During a fixed period of duration $t=b-a$, it follows that the number of events placing $j$ at the reference position will be distributed as $\mathrm{Pois}(2 t r_m/M)$.  Each such event will be followed by a co-residence period that lasts until the next migration event (or the end of the time interval); the sum of these period lengths is by definition $C_a^b$.  (Note that migration events leaving co-located vertices positions unchanged can be viewed as ``back-to-back'' co-residence intervals, and do not harm our analysis.)  As migration events follow a homogeneous Poisson process, it immediately follows that the co-residence periods are (momentarily setting aside truncation) distributed as iid $\mathrm{Exp}(2 r_m)$, and their total length is distributed as $\mathrm{Gamma}(K,2 r_m)$ (with $K$ being the number of intervals).  This implies that the total length of co-residence is a Poisson mixture of gamma deviates.  We further observe that, in the limit as $r_m \to \infty$, the length of any given co-residence interval approaches 0 almost surely, and hence the impact of truncation (which can affect only one such interval) on the total length of co-residence must also vanish as the migration rate increases.  Thus, we are justified in equating the limiting distribution of $C_a^b$ with the sum of untruncated co-residence intervals.  The limiting expectation of $C_a^b$ can then be obtained from the Poisson mixture,
\begin{align*}
\mathbf{E}C_a^b &\to \sum_{k=0}^\infty \mathrm{Pois}(k|2 t r_m/M) \frac{k}{ 2r_m}\\
&=\frac{t}{M},
\end{align*}
where we have used the fact that the $k$th mixture component has expectation $k/(2 r_M)$.  As the $k$th component has variance $k/(2 r_m)^2$, we may also employ standard properties of discrete mixtures to obtain the limiting variance,
\begin{align*}
\mathrm{Var}(C_a^b) &\to \sum_{k=0}^\infty \mathrm{Pois}(k|2 t r_m/M) \left[\left(\frac{k}{2 r_m}-\frac{t}{M}\right)^2+\frac{k}{(2r_m)^2}\right]\\
&= \sum_{k=0}^\infty \frac{(2 t r_m/M)^2 \exp(-2 t r_m/M)}{k!} \frac{k+(k-2 t r_m/M)^2}{4 r_m^2}\\
&=\frac{t}{r_m M}.
\end{align*}
Since $\mathrm{Var}(C_a^b) \to 0$ as $r_m \to \infty$, $C_a^b \to t/M$ in mean square (and hence in probability) in the fast-migration limit.

Now we consider the implications of conditioning on $G$.  By Bayes's theorem,
\begin{equation*}
p(C_a^b=c|G^{(0)}=g) \propto \Pr(G^{(0)}=g|C_a^b=c) p(C_a^b=c),
\end{equation*}
with $p(C_a^b)$ being the marginal probability density of $C_a^b$.  From the above, however, $p(C_a^b=c) \to 0$ for all $c\neq t/M$ as $r_m \to \infty$, and since by assumption $\Pr(G^{(0)}=g|C_a^b=c) \in (0,1)$ for all $g,c$ it follows that $C_a^b|G^{(0)}=g$ converges in probability to a degenerate distribution centered at $(b-a)/M$ in the limit of increasing migration rate.
\end{proof} 

The central implication of Lemma~\ref{l_ctime} is that, under extremely broad conditions, sufficiently fast migration rates will remove any information carried by network structure regarding vertex position arbitrarily quickly.  This leads to edgewise independence under the CFP, and can be employed to show independence of dyads under the CFPR as follows.

\begin{theorem} \label{t_indep}
Let $G^{(0)}$ be a random-time realization of a graph arising under a contact formation process with reciprocity, with adjacency matrix $Y$, and let $(i,j),(k,\ell)$ be vertex pairs within $G$ such that $(i,j)\neq (k,\ell)$ and $(i,j)\neq (\ell,k)$.  Then $Y_{ij} \perp Y_{k \ell}$, in the limit as $r_m\to\infty$.
\end{theorem}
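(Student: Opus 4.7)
The plan is to work with the event-based representation of the CFPR, in which the formation attempts, dissolution events, and migration events are mutually independent Poisson processes. Under that representation, the dyad state $(Y_{ij}^{(0)},Y_{ji}^{(0)})$ is a measurable functional of only the four pair-level Poisson streams for ordered pairs $(i,j)$ and $(j,i)$ together with the co-residence indicator $\mathbf{1}[F_i^{(t)}=F_j^{(t)}]$ for $t\le 0$, and analogously $(Y_{k\ell}^{(0)},Y_{\ell k}^{(0)})$ depends only on the $(k,\ell)$- and $(\ell,k)$-streams and the co-residence of $k$ and $\ell$. The exclusion of $(i,j)\in\{(k,\ell),(\ell,k)\}$ forces the dyads $\{i,j\}$ and $\{k,\ell\}$ to be distinct, so they share at most one vertex, and I would split the proof into two cases accordingly.

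If $\{i,j\}\cap\{k,\ell\}=\emptyset$, the eight pair-level Poisson streams and the four migration streams feeding the two dyads are disjoint and hence mutually independent, so $Y_{ij}\perp Y_{k\ell}$ outright, with no need to take $r_m\to\infty$. The substantive case is when the dyads share exactly one vertex; without loss of generality assume $i=k$ and $j\neq\ell$. Then the sole source of coupling is the migration trajectory of $i$, which drives both co-residence indicators $\mathbf{1}[F_i=F_j]$ and $\mathbf{1}[F_i=F_\ell]$. Condition on the entire migration history of $i$; given this, the migration processes of $j$ and $\ell$ remain independent Poisson processes, independent of each other and of all pair-level Poisson streams. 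Applying Lemma~\ref{l_ctime} conditionally to each of the two pairs shows that, over any finite pre-realization window $[a,b]$, the fractional co-residence times for $\{i,j\}$ and $\{i,\ell\}$ each converge in probability to $1/M$ as $r_m\to\infty$, and since the limits are deterministic the convergence is automatically joint.

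The main obstacle is to lift this convergence of co-residence \emph{totals} into a decoupling of the full dyad \emph{dynamics}, because $Y_{ij}^{(0)}$ depends on the joint timing of co-residence intervals with formation and dissolution events through the reciprocity feedback that links $Y_{ij}$ to $Y_{ji}$. I would handle this with a standard averaging/homogenization argument: as $r_m\to\infty$, the $\{i,j\}$-co-residence indicator oscillates on timescale $O(1/r_m)$, arbitrarily fast relative to $r_f$ and $r_\ell$, so thinning of the rate-$r_f$ $(i,j)$-formation-attempt process by $\mathbf{1}[F_i=F_j]$ converges in distribution, jointly with the other pair-level streams, to a rate-$r_f/M$ Poisson process that is independent of everything else; similarly for the $\{i,\ell\}$ pair, with its own independent rate-$r_f/M$ limit. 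In the fast-mixing limit, each dyad then evolves as a four-state continuous-time Markov chain (on the null, two asymmetric, and mutual configurations) with focus-driven formation rate $r_f/M$, reciprocity-driven formation rate $r_f$, and per-edge dissolution rate $r_\ell$, recovering the rates used implicitly in Section~\ref{sec_eq} and driven only by its own pair-level Poisson streams together with its own independent focus-driven input. Because the inputs driving $\{i,j\}$ and $\{i,\ell\}$ are disjoint and independent after the conditioning on $F_i$ has been absorbed into the deterministic limit, $(Y_{ij},Y_{ji})\perp(Y_{i\ell},Y_{\ell i})$ asymptotically, and in particular $Y_{ij}\perp Y_{k\ell}$. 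The homogenization step is where the real work lies; the remainder is bookkeeping on which Poisson streams feed which dyad.
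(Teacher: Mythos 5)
Your proposal is correct in substance but reaches the conclusion by a genuinely different route than the paper. The paper does not split into cases or condition on the shared vertex's trajectory: it works directly with the four-state chain on $(Y_{ij},Y_{ji})$ and shows that every transition rate is asymptotically invariant to the value of $Y_{k\ell}$ --- the reciprocation and dissolution transitions trivially so (fixed rates $r_f$ and $r_\ell$), and the two null-to-asymmetric transitions via Lemma~\ref{l_ctime}, which gives $C_a^b \mid Y_{k\ell} \to (b-a)/M$; invariance of the dynamics then yields independence of the time-$0$ state. Your argument instead localizes the coupling explicitly: you observe that for vertex-disjoint dyads the driving Poisson streams are disjoint, giving \emph{exact} independence for all $r_m$ (a strengthening the paper never states), and that for dyads sharing a vertex the two dyad processes are already exactly conditionally independent given that vertex's migration history, so the whole burden falls on showing the conditional law of each dyad becomes trajectory-invariant in the limit. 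That last step --- your homogenization of the thinned formation stream to a rate-$r_f/M$ Poisson process --- is the analogue of the paper's appeal to Lemma~\ref{l_ctime}, but note it is not a literal application of that lemma: the lemma concentrates $C_a^b$ conditional on the realized graph, whereas you need concentration conditional on one vertex's full migration trajectory. The same Poisson-mixture computation goes through (with $i$'s jumps treated as deterministic reference changes), but you should say so rather than cite the lemma directly. Both proofs leave the final lifting step (from invariance of the driving dynamics to independence of the realized states) at a comparable level of informality, so I do not count that against you; on balance your decomposition is more transparent about where the dependence lives and delivers joint independence of the full dyad states, at the cost of having to restate the concentration argument in conditional form.
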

\begin{proof}
First, observe that the $i,j$ dyad has four states (which we will here represent in terms of $(Y_{ij},Y_{ji})$ pairs as (0,0), (0,1), (1,0), and (1,1)).  We prove asymptotic independence of the $i,j$ dyad from the $(k,\ell)$ edge variable by showing that the transition rates between $i,j$ dyad states become invariant with respect to the current state of $Y_{k\ell}$ as $r_m \to \infty$, for any prior time period.  

We begin by noting that all but two transitions rates trivially satisfy this invariance: from the definition of the CFPR, it immediately follows that $(0,1) \to (1,1)$ and $(1,0) \to (1,1)$ occur with fixed rate $r_f$, and that the four transitions $(0,1) \to (0,0)$, $(1,0) \to (0,0)$, $(1,1) \to (1,0)$, and $(1,1) \to (0,1)$ occur with fixed rate $r_\ell$.  This leaves us with $(0,0) \to (1,0)$ and $(0,0) \to (0,1)$.  Consider the former.  The probability of such a transition occuring within some arbitrary interval $[a,b]$ with $a<b\le 0$ (given that we begin in state (0,0) at time $a$) is $r_f C_a^b|Y_{k\ell}$, where $C_a^b$ is the total amount of time within $[a,b]$ in which $i$ and $j$ reside in the same focus.  By Lemma~\ref{l_ctime}, $C_a^b|Y_{k\ell} \to (b-a)/M$ as $r_m\to \infty$, which is invariant to $Y_{k\ell}$.  By symmetry, this result also holds (\emph{mutatis mutandis}) for the $(0,0) \to (0,1)$ transition.  

Given that the probability of an $i,j$ dyad transition for any previous $[a,b]$ interval is invariant to the time 0 state of $Y_{k\ell}$ (whatever state the dyad itself happens to be in), it follows that the state of the dyad at time 0 cannot depend on $Y_{k,\ell}$.  Thus, in the limit as $r_m\to\infty$, $Y_{ij}$ is independent of $Y_{k\ell}$.
\end{proof}

The intuition behind Theorem~\ref{t_indep} is fairly simple: edges can only influence formation events within dyads, implying that the only remaining source of dependence between edges in different dyads must arise from implicit information regarding vertex co-residence embedded in the edge structure.  However, Lemma~\ref{l_ctime} tells us that (in the fast-migration limit) this information washes out arbitrarily fast, and hence dyads must become independent as $r_m$ diverges.  This is the same mechanism that produces independence in the undirected CFP (and, though we do not show it here, it is trivially true for the directed CFP without reciprocity as well).

\subsubsection{ERGM Representation}

Since the equilibrium graph distribution from the CFPR in the fast-mixing limit is homogeneous with dyadic independence, it follows from the Hammersley-Clifford theorem that it has an ERGM representation in terms of counts of edges and mutuals \citep[see][]{pattison.robins:sm:2002}.  To find the associated parameter values from the CFPR parameters, we proceed as follows.

First, to obtain the edge parameter, we consider the conditional odds of an $(i,j)$ edge without an incoming reciprocated tie.  To simplify notation, we write edge states in terms of the realized adjacency matrix, $Y$, and introduce indicator variables for dyadic states.  Specifically, we define $\mathbb{A}_{ij}$ to be an indicator for the $i,j$ dyad being dyadic, $\mathbb{M}_{ij}$ for its being mutual, and $\mathbb{N}_{ij}$ for its being null.  With this notation, the conditional probability of an unreciprocated $(i,j)$ edge in the fast-mixing limit is seen to be
\begin{align}
\Pr( \mathbb{A}_{ij} | Y_{ji}=0) &= \frac{\Pr(Y_{ji}=0|\mathbb{A}_{ij}) \Pr(\mathbb{A}_{ij})}{\Pr(Y_{ji}=0|\mathbb{A}_{ij}) \Pr(\mathbb{A}_{ij}) + \Pr(Y_{ji}=0|\mathbb{N}_{ij}) \Pr(\mathbb{N}_{ij}) } \nonumber \\
 &= \frac{1}{2} \frac{\frac{D_a}{D}}{\frac{1}{2} \frac{D_a}{D} + 1 \frac{D_n}{D}} \nonumber \\
 &= \frac{1}{1 + 2 \frac{D_n}{D_a}}
\end{align}
and thus the conditional odds of an edge given non-reciprocation are
\begin{align}
\frac{\Pr( Y_{ij}=1 | Y_{ji}=0)}{\Pr( Y_{ij}=0 | Y_{ji}=0)} &= \frac{1}{1 + 2 \frac{D_n}{D_a}} \frac{1 + 2 \frac{D_n}{D_a}}{2 \frac{D_n}{D_a}} \nonumber \\
                   &= \frac{1}{2} \frac{D_a}{D_n} \nonumber \\
  &= \frac{1}{2} \frac{2 D }{ \frac{r_f}{r_\ell} + 2 + M \frac{r_\ell}{r_f} } \left[\frac{M D}{ \left(\frac{r_f}{r_\ell}\right)^2 + 2 \frac{r_f}{r_\ell} + M }\right]^{-1}  \nonumber \\
  &= \frac{1}{ \frac{r_f}{r_\ell} + 2 + M \frac{r_\ell}{r_f} }  \left[\frac{1}{ \left(\frac{r_f}{r_\ell}\right)^2/M + \frac{2}{M} \frac{r_f}{r_\ell} + 1 }\right]^{-1} \nonumber \\
  &= \frac{ \left(\frac{r_f}{r_\ell}\right)^2/M + \frac{2}{M} \frac{r_f}{r_\ell} + 1 }{ \frac{r_f}{r_\ell} + 2 + M \frac{r_\ell}{r_f} } \nonumber \\
  &\xrightarrow[N\to\infty]{} \frac{1}{ \frac{N}{P} \frac{r_\ell}{r_f} }  \nonumber \\
  &= \frac{P}{N} \frac{r_f}{r_\ell},
\end{align}
implying that the edge parameter in the large-$N$, fast-mixing case must be 
\begin{equation}
\theta_e = \log\left(P \frac{r_f}{r_\ell}\right) - \log N,
\end{equation}
where we are using the fact that, in ERGM form, the edge parameter in an edge/mutual model must be the logit of the conditional probability of an unreciprocated edge.

To obtain the mutuality parameter, we now consider the probability of an $(i,j)$ tie given an incoming reciprocated edge in the fast-mixing limit:
\begin{align}
\Pr( \mathbb{M}_{ij} | Y_{ji}=1) &= \frac{\Pr(Y_{ji}=1|\mathbb{M}_{ij}) \Pr(\mathbb{M}_{ij})}{\Pr(Y_{ji}=1|\mathbb{M}_{ij}) \Pr(\mathbb{M}_{ij}) + \Pr(Y_{ji}=1|\mathbb{A}_{ij}) \Pr(\mathbb{A}_{ij}) } \nonumber \\
 &= \frac{\frac{D_m}{D}}{\frac{D_m}{D} + \frac{1}{2} \frac{D_a}{D}} \nonumber \\
 &= \left[1 + \frac{1}{2} \frac{D_a}{D_m}\right]^{-1}.
\end{align}
The odds of an edge given reciprocation are hence
\begin{align}
\frac{\Pr( Y_{ij}=1 | Y_{ji}=1)}{\Pr( Y_{ij}=0 | Y_{ji}=1)} &= \frac{1}{1 + \frac{1}{2} \frac{D_a}{D_m}} \frac{1 + \frac{1}{2} \frac{D_a}{D_m}}{\frac{1}{2} \frac{D_a}{D_m}} \nonumber \\
                  &= 2 \frac{D_m}{D_a} \nonumber \\
  &= \frac{2 D}{ 1 + 2 \frac{r_\ell}{r_f} + M \left(\frac{r_\ell}{r_f}\right)^2 }  \left[\frac{2 D }{ r_f/r_\ell + 2 + M r_\ell/r_f }\right]^{-1} \nonumber \\
  &= \frac{ r_f/r_\ell + 2 + M r_\ell/r_f }{ 1 + 2 r_\ell/r_f + M (r_\ell/r_f)^2 } \nonumber \\
  &\xrightarrow[N\to\infty]{} \frac{ M r_\ell/r_f }{ M (r_\ell/r_f)^2 } \nonumber \\
  &= r_f/r_\ell,
\end{align}
and the mutuality parameter in the large-$N$, fast-mixing limit must therefore be
\begin{align}
\theta_m &= \log[r_f/r_\ell] - \theta_e  \nonumber \\
   &= \log[r_f/r_\ell] - \log[P r_f/r_\ell] + \log N \nonumber \\
   &= -\log P + \log N.
\end{align}

Putting this all together, and moving terms involving $N$ to the reference measure, we obtain the reference model

\begin{gather}
\psi_e = \log[P r_f/r_\ell]\\
\psi_m = -\log P\\
h(y) = N^{t_m(y)-t_e(y)}.
\end{gather}

It is noteworthy that using this reference model together with free parameters for edges and mutuals leads to a model family equivalent to the offset family proposed by \citet{krivitsky.kolaczyk:ss:2015} for preserving reciprocity and mean degree in large networks.  Our development thus provides a mechanistic interpretation for the offset model in terms of a contact formation process; we consider additional substantive implications below.

\subsection{Requirements for Time Scale Separation} \label{sec_timescale}

In section~\ref{sec_eq}, we obtained exact expressions for the behavior of the CFPR in the large-$N$, fast-mixing limit.  One may reasonably ask, however, how fast the migration rate must be (relative to $r_f$ and $r_\ell$) for these expressions to apply.  To investigate this, we simulate draws from the CFPR with varying migration rates, examining the properties of the resulting networks.  For our simulation study, we employ a full factorial design with treatments $N \in (50,100,200,400)$, $P \in (5,10,25)$, and $log_5 r_m \in (-4,-3,\ldots,3,4)$ (500 replicate draws per condition).  For all draws we take $r_f=1$, $r_\ell=5$, and simulate for 100 time units (the resulting draw being the final system state); simulations are initialized with homogeneous Bernoulli graphs at the limiting expected density, with random assignment of vertices to foci.  Simulations and analyses were performed using a combination of custom scripts and tools from the \texttt{sna} package for the R statistical computing system \citep{butts:jss:2008b}.

To begin, we consider the time scale required for effective convergence of mean degree and reciprocity to the asymptotic limit.  Figure~\ref{f_meandeg} shows the variation in mean degree by $P$ and $r_m$ for the largest simulated treatment ($N=400$), together with theoretical limits (dotted lines) for the slow and fast mixing regimes respectively.  These results mirror those seen for the undirected CFP in \citet{butts:jms:2019}, with excellent convergence to the fast-mixing limit when $r_m$ is roughly two orders of magnitude faster than the edge formation/loss rates.  As with the CFP case, we see little if any impact of $P$ on convergence.  Turning to reciprocity, figure~\ref{f_recip} shows mean reciprocity by $r_m$ for the full range of simulated cases.  For comparison, we also include simulations from a directed CFP \emph{without} reciprocation (i.e., the original CFP, but applied to directed rather than undirected edges).  Interestingly, we observe that all models produce similar behavior in the slow mixing case: this is because (1) migration plays no role here (and hence there is no excess edge formation due to closure of dyads that span foci), and (2) all scenarios employed here have the same expected density in the slow mixing limit.  The difference between the CFPR and the CFP becomes evident when the mixing rate increases.  Under fast mixing, the reciprocation rate under the CFP falls because vertices are less frequently co-located with their neighbors.  Under the CFPR, however, edgewise reciprocity holds constant.  Note that, unlike mean degree, reciprocity preservation holds at \emph{all} mixing rates: this is because the CFPR acts to nullify the impact of foci on reciprocation opportunity, decoupling the migration process from reciprocation altogether.  (Compare with the use of the mutuality offset to nullify the effect of the edge offset in the development of \citet{krivitsky.kolaczyk:ss:2015}.)

\begin{figure}[h]
  \centering
    \includegraphics[width=0.9\textwidth]{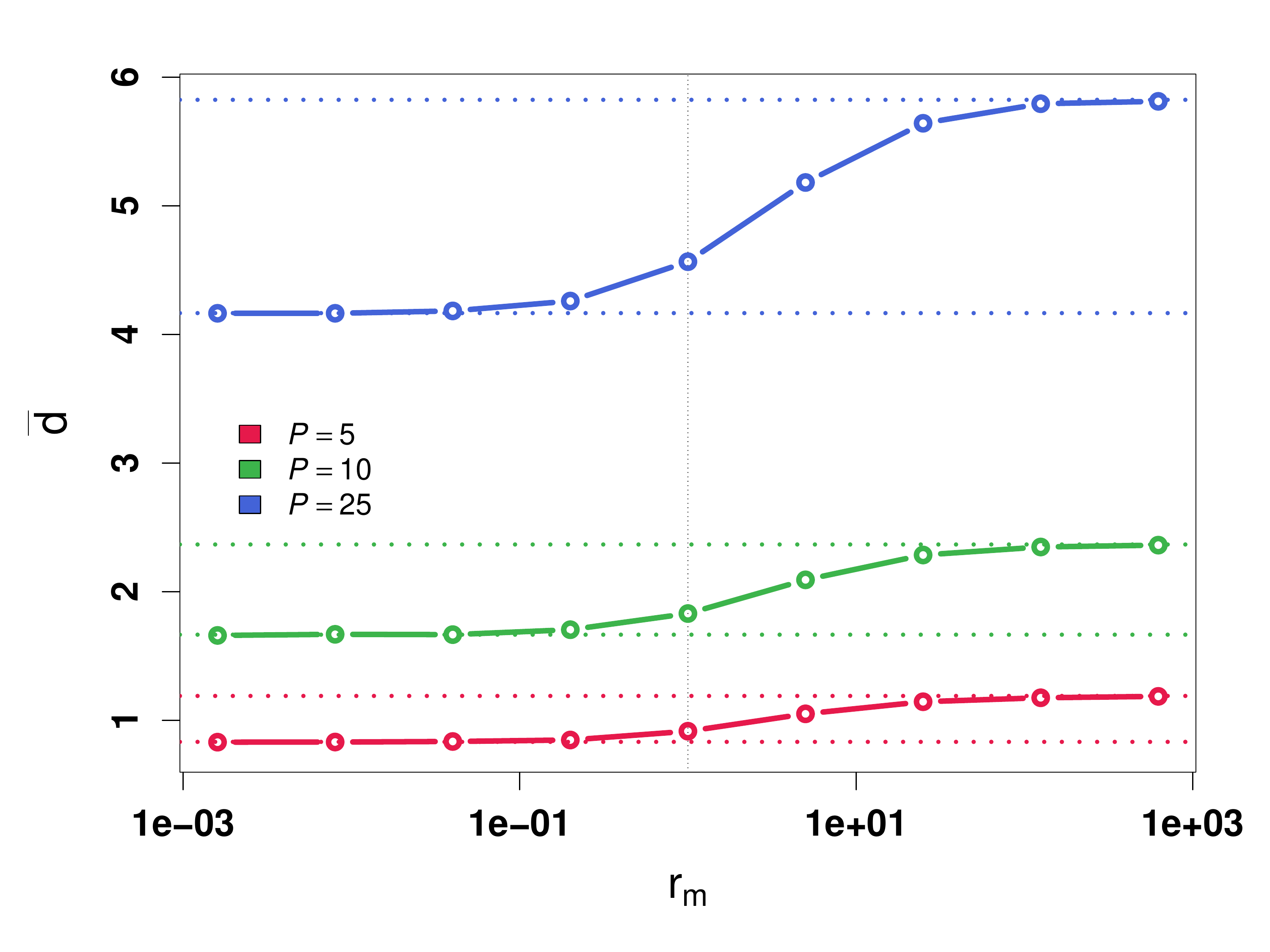}
    \caption{Simulated mean degree (heavy lines) and theoretical limits (dotted lines) as a function of $r_m$ and $P$; 95\% confidence intervals for the simulation mean are too narrow to be visible.  Effective time scale separation is achieved by $r_m>100$.  \label{f_meandeg}}
\end{figure}

\begin{figure}[h]
  \centering
    \includegraphics[width=0.9\textwidth]{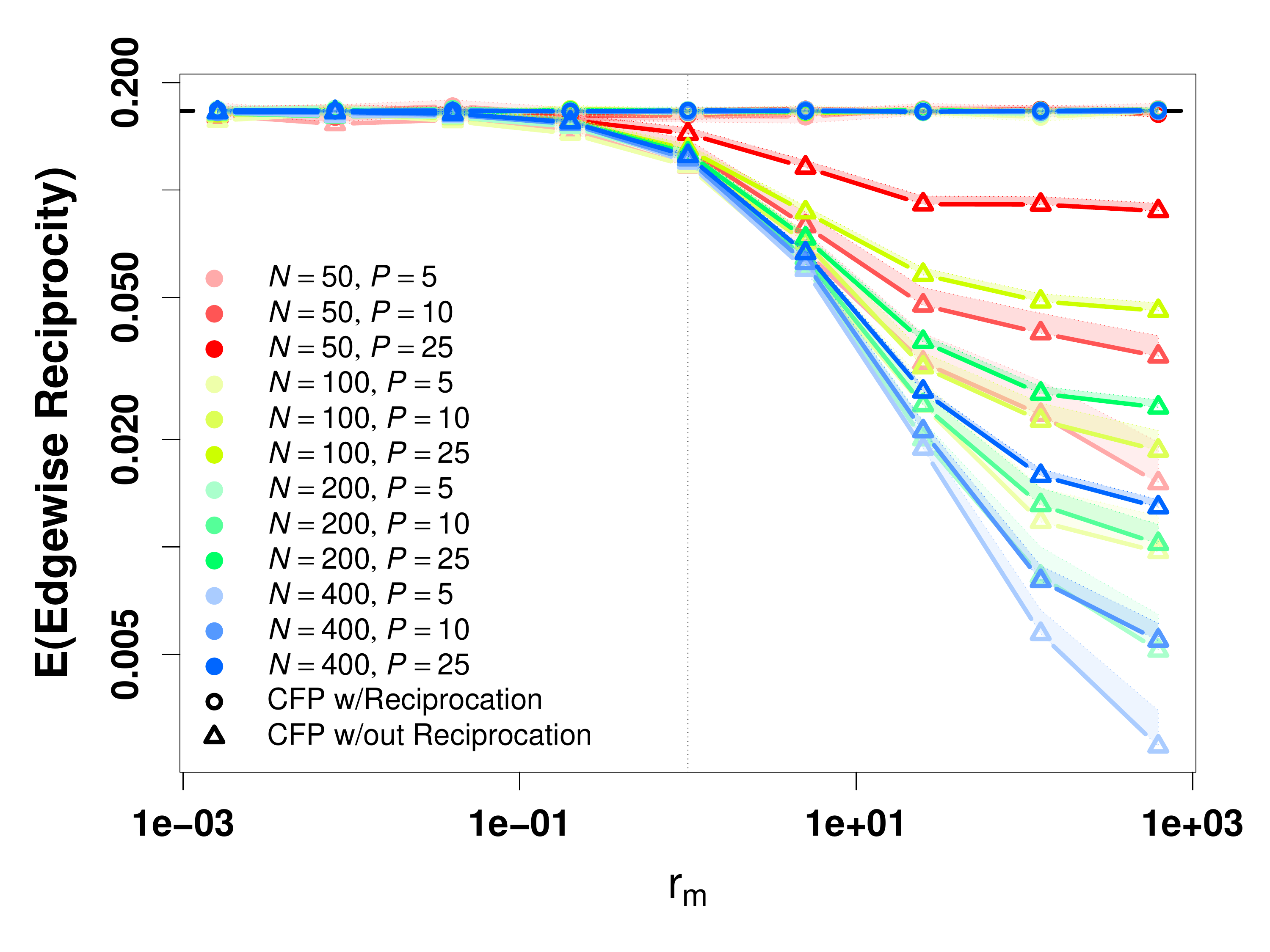}
    \caption{Simulated mean edgewise reciprocity by mixing rate, for different $N,P$ conditions (color) and process types (point shape); shaded areas indicate 95\% confidence intervals.  At low mixing rates, reciprocity rates for all models correspond to the common limiting density.  Under fast mixing, reciprocity falls in the regular CFP conditions (triangles) while remaining constant under all CFPR conditions (circles).\label{f_recip}}
\end{figure}

Beyond proper scaling of mean degree and reciprocity, we also wish to determine the migration rate needed for the approximation of dyadic independence to hold.  To assess this, we consider the difference between the expected triad census obtained under the CFPR (and, for comparison, directed CFP) under each condition and the census that would be obtained under a corresponding $u|man$ model with the same dyad frequencies.  For this purpose we take the mean triad census over all replications in each condition, repeating this process for 5,000 draws from the $u|man$ distribution with expected dyad census equal to the mean dyad census from the contact process simulations.  We then calculate Hotelling's $T^2$ (a multivariate generalization of the $t$ statistic) for the contact process triad census versus the corresponding $u|man$ census in each condition.  Figure~\ref{f_triad} shows the resulting $T^2$ values by condition and model, as a function of migration rate.  Unsurprisingly, the presence of latent foci introduce clustering at low $r_m$, resulting in a triad census that deviates substantially from what would be obtained under dyadic independence.  As the mixing rate increases, however, dyadic independence weakens, and the triad census converges to the $u|man$ limit.  In particular, once $r_m$ is approximately two orders of magnitude greater than the formation and dissolution rates, we see little to no significant deviation from dyadic independence.  We also note that the directed CFP and the CFPR behave quite similarly, with no obvious impact of adding reciprocation opportunities to the base CFP on the convergence rate.  By contrast, higher values of $N$ and $P$ are associated with somewhat slower convergence at low $r_m$, though all conditions seem to collapse rapidly to the independent limit by around $r_m=100$.

\begin{figure}[h]
  \centering
    \includegraphics[width=0.75\textwidth]{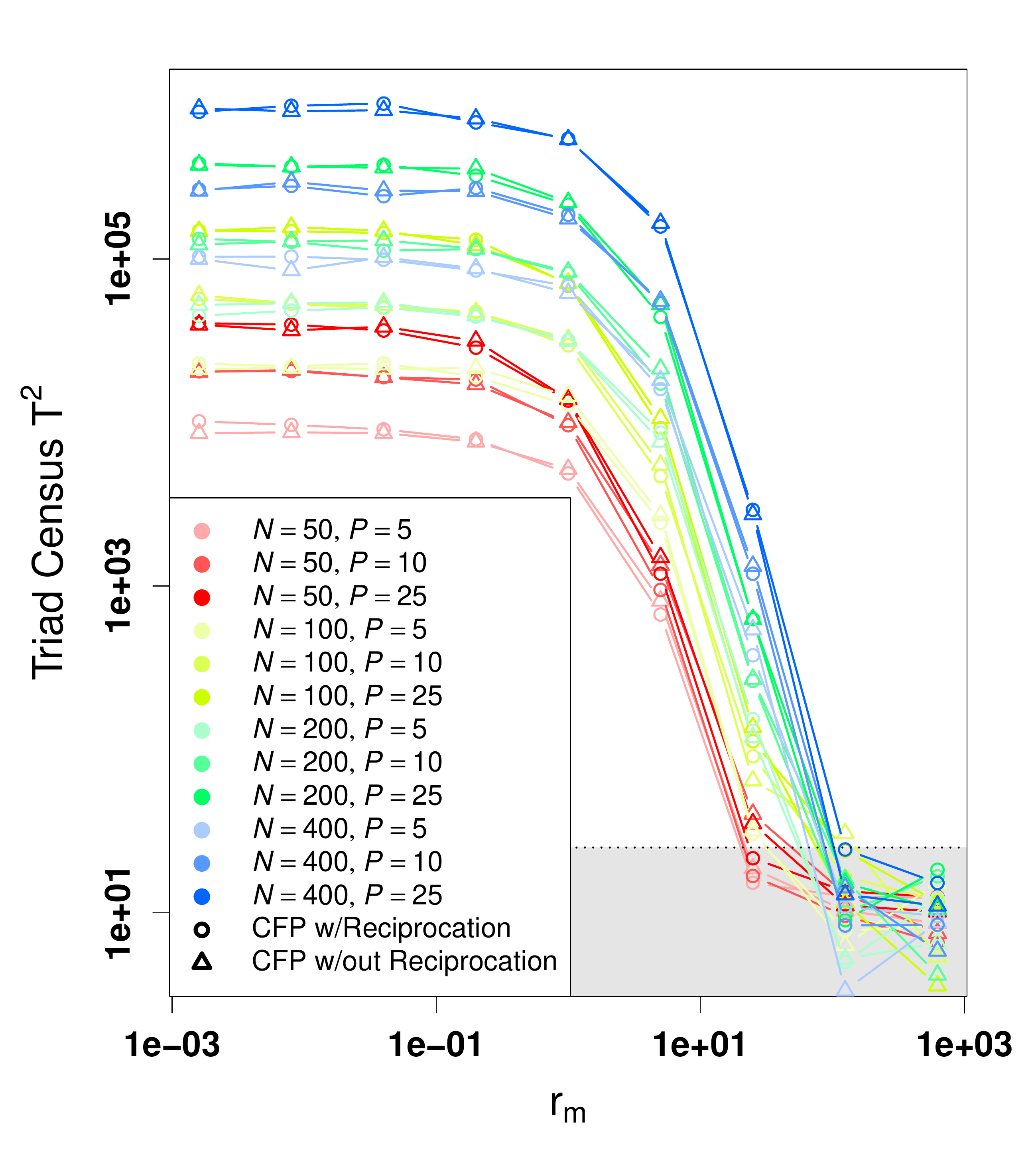}
    \caption{Convergence of the triad census to expectations under dyadic independence, as a function of $r_m$.  Lines show Hotelling $T^2$ values for the triad census under simulated networks in each condition versus a corresponding $u|man$ sample; colors indicate $C,P$ condition, with process type indicated by point shape.  Shaded area and horizontal line shows threshold for significance at the 0.05 level.  Deviations from the independent dyad model are minimal beyond $r_m>100$ for both CFP and CFPR. \label{f_triad}}
\end{figure}



\section{``Pure'' Focus Parameterization} \label{sec_varm}

We have generally assumed that, as population increases, the number of foci occupied also increases so as to maintain constant population density (i.e., constant $P$).  This is a plausible assumption in many cases, but is not essential; alternative assumptions regarding the scaling of focus count with $N$ can be employed, resulting in different mean-degree scaling.  For this purpose, it may be more useful to work with an alternative specification of the reference model that removes $P$ entirely, leaving us with a formulation in terms of $M$ alone:
\begin{gather}
\psi_e = \log[r_f/r_\ell] \nonumber \\
\psi_m = 0 \nonumber \\
h(y) = M^{t_m(y)-t_e(y)}. \label{e_altmref}
\end{gather}
This formulation is convenient in providing a helpful (if somewhat oversimplified) intuition connecting the CFPR migration process to the resulting baseline graph distribution, and in highlighting the difference between the CFP and the CFPR.  Intuitively, a basic feature of the CFP is that tie formation requires both parties to be within the same focus.  In equilibrium, the chance of finding two vertices in the same focus is $1/M$.  For a graph with $t_e(y)$ edges, this must obviously have happened $t_e(y)$ times; if we naively treat these as independent events (appealing to the assumption of fast mixing) then this suggests a reference measure that scales as $M^{-t_e(y)}$.  This is indeed a valid expression for the reference measure arising from a pure CFP, and the intuition carries.  In the case of the CFPR, this requirement of co-presence is released for reciprocating edges, of which there are $t_m(y)$.  Thus, the same intuition would suggest that we simply deduct this number from the number of required co-incidence events, leading to a reference measure that scales as $M^{t_(y)-t_e(y)}$.  This is precisely what we obtain in equation~\ref{e_altmref}, and the intuitive answer is again correct.  As we have seen, a rigorous development of the reference measure is considerably more involved than this naive intuition would suggest, and such arguments cannot be relied upon \emph{ex ante} to obtain expressions for model behavior.  However, the core insight of the intuitive argument regarding the impact of foci on edge probability in the CFP and CFPR is correct, and it may hence be helpful as an aid to interpretation.

\subsection{$M$ Scaling for Spatial Systems}

Note that, as $M$ is not in general known in empirical settings, some functional form for it in terms of $N$ must be assumed (e.g., the linear form used elsewhere in this paper).  In some settings, there may be a natural choice in this regard.  For instance, consider a spatial system with volume $V$ and ``concentration'' (i.e., spatial population density) $C$, with some $v$ being the volume of the largest volume element over which ties can be formed.  In this case, it is obvious that $N=CV$ and $M=V/v$, giving us 
\begin{align*}
h(y) &= (V/v)^{t_m(y)-t_e(y)}.
\end{align*}
This development makes no particular assumptions about dimensionality, but we can easily generalize this to e.g. a $d$-dimensional hypercube of length $L$, with $l$ being the side length of the critical volume element.  Then $V=L^d$, $v=l^d$, and
\begin{align*}
h(y) &= (L/l)^{d(t_m(y)-t_e(y))}.
\end{align*}
From this it becomes possible to predict changes in network structure arising from changing the range over which vertices can interact, the scale of the system as a whole, or even the dimensionality of the system (possibly relevant in the context of Blau spaces).  Similar developments are possible for other classes of systems.

\subsection{$M$ Scaling for Non-constant Mean Degree}

\citet{butts.almquist:jms:2015} provide a phenomenological treatment of correction for mean degree scaling of the form $\bar{d}\propto N^\gamma$, which takes the Bernoulli baseline ($\gamma=1$) and Krivitsky reference ($\gamma=0$) as special cases while also accommodating phenomena such as so-called\footnote{Contrary to what the name implies, networks undergoing ``power law densification'' are actually growing more sparse.  They are, however, doing so more slowly than they would under constant mean degree.} ``power law densification'' \citep{leskovec.et.al:kdd:2007}.  We can derive the equivalent mechanistic conditions required for this phenomenon under the CFPR by exploiting equation~\ref{e_meandeg}:
\begin{align*}
N^\gamma &\propto P \frac{r_f}{r_l}\left(\frac{r_f}{r_l}+1\right)\\ 
         &= \frac{N}{M} \frac{r_f}{r_l}\left(\frac{r_f}{r_l}+1\right).
\end{align*}
Choosing $M=N^{1-\gamma}$ leads to 
\[
\lim_{N\to\infty} \bar{d} = N^\gamma \frac{r_f}{r_l}\left(\frac{r_f}{r_l}+1\right),
\]
as desired.  We can also verify from this that the Bernoulli baseline (counting measure) implies $M=1$ (hence, no migration process) and the constant mean degree scaling implies $M \propto N$ (as has been our focus in this paper).  More generally, focus counts that scale sublinearly (e.g., $M \propto N^\alpha$, for $\alpha \in (0,1)$) will induce ``power law densificiation'' in the sense of \citep{leskovec.et.al:kdd:2007}.  We also observe from this result that settings in which the number of foci grows supralinearly in population size would be expected to result in ``super-sparse'' networks for which mean degree itself declines in $N$.  The existence of such settings is an interesting empirical question.




\section{Conclusion}

In directed relations, we often see edge reciprocation rates that remain constant or nearly constant as $N$ increases, even while density falls.  Here, we have provided a simple extension of the contact formation process that provides a mechanistic account of the phenomenon while also leading to a well-characterized family of graph distributions that can serve as the starting point for more complex model building.  As with the original CFP, we find that migration rates roughly two orders of magnitude faster than tie formation/loss rates are sufficient for timescale separation.  For social ties with durations on the order of months, this implies CFP foci with typical residence times on the order of hours or less (e.g., meeting or gathering places, discussion settings or small group interactions, etc.).  For longer-term ties with durations on the order of years, correspondingly longer residence times (e.g., on the order of days or weeks) become feasible.  These time periods are long enough to cover a wide range of transient settings (both physical and virtual) in which individuals meet, interact, and potentially initiate social relationships.

Although constant mean degree scaling is often a good starting assumption for social networks, other options are also possible.  This is easily explored under the CFP and CFPR by relaxing the convenient assumption that $P$ is constant in $N$.  Here, we noted how such relaxations can be used to incorporate \emph{a priori} theories about how $M$ scales with $N$, or alternately to adjust for different types of mean degree scaling.  Further work on factors that might be expected to alter the richness of the social ecology (and hence $M$) could be helpful in suggesting additional regimes for further exploration.

Finally, we note that the specific mechanism invoked in the CFPR to explain constant reciprocation - that a nomination by alter always leaves alter an available target for nomination by ego - is not the only means by which reciprocity could be sustained, nor is it necessarily active in all settings.  For instance, in a ``blind'' nomination context in which ego cannot necessarily be assumed to know who nominates him or her, it is more natural to assume a CFP baseline than a CFPR baseline.  Contact formation is only one of many types of micro-level processes that can lead to social structure, and development of alternative models would give us a richer set of options for explaining the regularities of social and other networks.

\bibliography{ctb}


\end{document}